\newtheorem{theorem}{Theorem}
\newtheorem{lemma}[theorem]{Lemma}
\theoremstyle{remark}
\newtheorem{remark}[theorem]{Remark}
\newtheorem{example}[theorem]{Example}
\newcommand{\F}{\mathbb{F}}
\newcommand{\wtsr}{\mathrm{wt}_{\mathrm{sr}}}
\newcommand{\wtH}{\mathrm{wt}_{\mathrm{H}}}
\newcommand{\supp}{\mathrm{supp}}
\newcommand{\SR}{\mathrm{SR}}
\title{\vspace{-1em}Two-Step Decoding of Binary $2\times2$ Sum-Rank-Metric Codes}
\author{Hao Wu$^1$, Bocong Chen$^1$, Guanghui Zhang$^2$ and Hongwei Liu$^3$\footnote{E-mail addresses:
mawuhao\_math@mail.scut.edu.cn (H. Wu);
bocongchen@foxmail.com (B. Chen);
zghui@squ.edu.cn (G. Zhang); hwliu@ccnu.edu.cn (H. Liu).}}
\date{\small
$1.$ School of Mathematics, South China University of Technology, Guangzhou, 510641, China\\
$2.$ School of Mathematics and Physics, Suqian University, Suqian, Jiangsu 223800, China\\
$3.$ School of Mathematics and Statistics, Central China Normal University, Wuhan 430079, China
}
\begin{document}
\maketitle
\begin{abstract}
We address an open problem posed by Chen-Cheng-Qi (IEEE Trans.\ Inf.\ Theory, 2025):
can the decoding of binary sum-rank-metric codes $\SR(C_1,C_2)$ with $2\times2$ matrix blocks be reduced entirely to decoding the constituent Hamming-metric codes $C_1$ and $C_2$ without the additional requirement $d_1\ge\tfrac{2}{3}d_{\mathrm{sr}}$ used in their fast decoder?
We answer this in the affirmative by exhibiting a simple two-step procedure: first uniquely decode $C_2$, then apply a single error-erasure decoding for $C_1$.
This shows that the restrictive hypothesis
$d_1\ge\tfrac{2}{3}d_{\mathrm{sr}}$ is theoretically unnecessary.
The resulting decoder achieves unique decoding up to
$\lfloor (d_{\mathrm{sr}}-1)/2\rfloor$ with overall cost $T_2+T_1$, where $T_2$ and $T_1$ are the complexities of the Hamming decoders for $C_2$ and $C_1$, respectively.
We further show that this reduction is asymptotically optimal in a black-box model, as any sum-rank decoder must inherently decode the constituent Hamming codes.
For BCH or Goppa instantiations over $\F_4$, the decoder runs in
$O(\ell^2)$ time.
\end{abstract}

\noindent\textbf{Index Terms}--Sum-rank metric, error-erasure decoding, BCH codes, Goppa codes.

\section{Introduction}
Let $q$ be a prime power and let $\ell\ge1$ be an integer.
Let $\F_q$ be the finite field with $q$ elements.
For integers $m,n\ge1$, denote by $\F_q^{m\times n}$ the
set of all $m\times n$ matrices with entries in $\F_q$.
For positive integers $m_1,\dots,m_\ell$ and $n_1,\dots,n_\ell$, consider the ambient space
\[
\mathcal{V}
=\F_q^{m_1\times n_1}\times\cdots\times \F_q^{m_\ell\times n_\ell}.
\]
For a vector $X=(X_1,\dots,X_\ell)\in\mathcal{V}$ with $X_i\in\F_q^{m_i\times n_i}$, its \emph{sum-rank weight} is defined as
\[
\wtsr(X)\;=\;\sum_{i=1}^\ell \mathrm{rk}_q(X_i),
\]
where $\mathrm{rk}_q(\cdot)$ denotes the matrix rank over $\F_q$. The associated \emph{sum-rank distance} between $X,Y\in\mathcal{V}$ is
\[
d_{\mathrm{sr}}(X,Y)\;=\;\wtsr(X-Y).
\]
This defines a metric on $\mathcal{V}$.

A $q$-ary \emph{sum-rank-metric code} $C$ with block length $\ell$ and matrix sizes $m_1\times n_1, m_2\times n_2, \cdots, m_\ell\times n_\ell$ is a subset of the
space $\mathcal{V}$. For a $q$-ary sum-rank-metric code (not necessarily linear) $C\subseteq\mathcal{V}$, one defines its minimum sum-rank distance by
\[
d_{\mathrm{sr}}(C)
\;=\;
\min_{\substack{X,Y\in C\\X\neq Y}} d_{\mathrm{sr}}(X,Y).
\]
When $C$ is $\F_q$-linear, this simplifies to
\[
d_{\mathrm{sr}}(C)
=\min_{0\neq Z\in C} \wtsr(Z).
\]

In this general framework,
Hamming metric arises when all blocks are $1\times1$, while rank metric arises when $\ell=1$.
More generally, sum-rank-metric codes allow several matrix blocks and simultaneously capture Hamming-type and rank-type behavior.

Sum-rank-metric codes form a natural bridge between classical Hamming-metric codes and rank-metric codes, and have found applications in multishot network coding \cite{MK, NPS, NU},
distributed storage \cite{CMST, MK2018, MP2018}, and space-time coding \cite{SK2022}.
From a theoretical perspective, sum-rank-metric codes inherit structural
features from both Hamming-metric and rank-metric codes and admit analogues
of classical notions such as minimum distance, Singleton bounds, and
maximum-distance-separable (MDS) constructions.
In recent years, many papers have studied the structure, constructions, and applications of sum-rank-metric codes;
see \cite{BGR, BGR2022, BZ, MK, MP2018, MP2020, MP2021, MP2022, MP2023, NERI2022, NERI2023} and the references therein.
In addition, we refer the reader to \cite{AGKP, AKR, BGR}  for fundamental properties and some bounds on sizes of sum-rank-metric codes
and to \cite{MSK2022} for a nice survey of sum-rank-metric codes and their applications.

A central problem in the theory of sum-rank codes is the design of efficient
decoding algorithms.
Several constructions and decoding strategies have been developed in recent
years; see, for example,
\cite{BJPR2021, HBP2022, PRR2022}.
However, in many settings the decoding procedures still rely on nontrivial
algebraic operations over extension fields or specialized code structures.
This raises a natural question: to what extent can decoding in the
sum-rank metric be reduced to decoding simpler constituent codes
in the Hamming metric?

Recently, Chen, Cheng, and Qi \cite{CCQ2025} studied binary
sum-rank-metric codes of the form $SR(C_1,C_2)$ with $2\times2$
matrix blocks. They gave an explicit and flexible construction based on two quaternary codes and
showed that this family already exhibits many of the features expected of general sum-rank-metric codes, such as good parameters,
algebraic structure, and efficient decoding.
In particular, they proposed a fast decoding algorithm whose correctness relies on the
additional assumption $d_1 \ge \tfrac{2}{3}d_{\mathrm{sr}}$,
where $d_1$ denotes the minimum Hamming distance of $C_1$
and $d_{\mathrm{sr}}$ is the minimum sum-rank distance of the code.
They posed the question of whether this extra constraint can be eliminated.
Specifically, they asked whether the decoding problem for the sum-rank-metric code can be reduced solely to
the decoders of $C_1$ and $C_2$, thereby removing the assumption $d_1 \ge \tfrac{2}{3}d_{\mathrm{sr}}$.

In this paper, we answer this question affirmatively.
We show that decoding of binary $2\times2$ sum-rank-metric codes
$SR(C_1,C_2)$ can be completely reduced to decoding the constituent
Hamming-metric codes $C_1$ and $C_2$, without requiring the condition
$d_1 \ge \tfrac{2}{3}d_{\mathrm{sr}}$.

\vspace{0.3cm}
{\bf A}.  \emph{The Chen-Cheng-Qi construction $\SR(C_1,C_2)$}
\vspace{0.3cm}

Fix the quadratic extension $\F_4/\F_2$ and an $\F_2$-basis $\{1,\omega\}$ of $\F_4$ with $\omega^2+\omega+1=0$.
As in~\cite{CCQ2025}, the starting point is a concrete
identification between pairs $(x_1,x_2)\in\F_4^2$
and $2\times2$ binary matrices.
For $x_1,x_2\in\F_4$, consider the \emph{$2$-polynomial}, i.e., linearized polynomial
\[
L_{x_1,x_2}(x)\;=\;x_1 x + x_2 x^2, x\in\F_4.
\]
Since $x\mapsto x^2$ is the Frobenius automorphism of $\F_4$ over $\F_2$, the map $L_{x_1,x_2}:\F_4\to\F_4$ is $\F_2$-linear. With respect to the fixed basis $\{1,\omega\}$, this $\F_2$-linear map is represented by a unique $2\times2$ binary matrix
\[
M(x_1,x_2)\;\in\;\F_2^{2\times2}.
\]
In this way we obtain an explicit $\F_2$-linear bijection
\[
\Phi:\F_4^2\longrightarrow \F_2^{2\times2},
\quad (x_1,x_2)\longmapsto M(x_1,x_2).
\]
A length-$\ell$ sequence of $2\times2$ matrices is then written as
\[
\mathbf{a}_1x+\mathbf{a}_2 x^2,\quad
\mathbf{a}_1=(a_{11},\cdots,a_{1\ell}),~
\mathbf{a}_2=(a_{21},\cdots,a_{2\ell})\in\F_4^\ell,
\]
where the $i$-th coordinate pair $(a_{1,i},a_{2,i})$ is sent to the matrix $M(a_{1,i},a_{2,i})$ via $\Phi$. Given  linear  quaternary codes $C_1=[\ell,k_1,d_1]_4$ and $C_2=[\ell,k_2,d_2]_4$, Chen-Cheng-Qi define the associated binary sum-rank-metric code with matrix size $2\times2$ by
\[
\SR(C_1,C_2)
\;=\;
\bigl\{\,\mathbf{a}_1x+\mathbf{a}_2x^2 : \mathbf{a}_1\in C_1,\ \mathbf{a}_2\in C_2\,\bigr\},
\]
that is, each codeword of $\SR(C_1,C_2)$ is an $\ell$-tuple of $2\times2$ binary matrices obtained by applying $\Phi$ coordinatewise to a pair
$(\mathbf{a}_1,\mathbf{a}_2)\in C_1\times C_2$.
It was shown that
$\SR(C_1,C_2)$ is a binary linear sum-rank-metric code of block length $\ell$, matrix size $2\times2$, and   dimension
\[
\dim_{\F_2}\SR(C_1,C_2)=2(k_1+k_2),
\]
see~\cite[Theorem 2.1]{CCQ2025}.
A key ingredient in~\cite{CCQ2025} is an  exact decomposition of the sum-rank weight of a codeword $\mathbf{a}_1x+\mathbf{a}_2x^2\in\SR(C_1,C_2)$ in terms of the Hamming weights of $\mathbf{a}_1$ and $\mathbf{a}_2$. Writing
\[
I=\supp(\mathbf{a}_1)\cap\supp(\mathbf{a}_2)\subseteq\{1,\dots,\ell\},
\]
Chen-Cheng-Qi prove the identity
\begin{equation}\label{eq:CCQ-weight}
\mathrm{wt}_{\mathrm{sr}}(\mathbf{a}_1x+\mathbf{a}_2x^2)
=2\,\mathrm{wt}_H(\mathbf{a}_1)+2\,\mathrm{wt}_H(\mathbf{a}_2)-3|I|,
\end{equation}
see~\cite[Theorem 2.2]{CCQ2025}; in particular,
\[
d_{\mathrm{sr}}(\SR(C_1,C_2))
\ \ge\ \min\{d_1,2d_2\}.
\]
Thus the pair $(C_1,C_2)$ controls both the dimension and the distance of $\SR(C_1,C_2)$ via classical Hamming-metric parameters.

\vspace{0.3cm}
{\bf B}. \emph{The fast decoder and its complexity}
\vspace{0.3cm}

In the second half of~\cite{CCQ2025}, the authors use the weight formula~\eqref{eq:CCQ-weight} to derive a fast decoding algorithm for $\SR(C_1,C_2)$, reducing sum-rank decoding to Hamming-metric decoding of $C_1$ and $C_2$. Let $d_{\mathrm{sr}}$ denote the minimum sum-rank distance of $\SR(C_1,C_2)$, and write the transmission as
\[
\mathbf{y}=\mathbf{c}+\mathbf{e},\qquad \mathbf{c}=\mathbf{a}_1x+\mathbf{a}_2x^2\in\SR(C_1,C_2),\quad \mathbf{e}=\mathbf{e}_1x+\mathbf{e}_2x^2.
\]
Write $\mathbf{e}_1=(e_{1,1},\dots,e_{1,\ell})$ and $\mathbf{e}_2=(e_{2,1},\dots,e_{2,\ell})$. Chen-Cheng-Qi partition the coordinates into three types according to the pattern of $(e_{1,i},e_{2,i})$:
\[
I_1=\{\,i : e_{1,i}\neq0,\ e_{2,i}=0\,\},\quad
I_2=\{\,i : e_{1,i}=0,\ e_{2,i}\neq0\,\},\quad
I_3=\{\,i : e_{1,i}\neq0,\ e_{2,i}\neq0\,\},
\]
and denote $i_j=|I_j|$ for $j=1,2,3$. A direct inspection of the $2\times2$ rank in each case gives
\[
\wtsr(\mathbf{e})=2i_1+2i_2+i_3,
\]
so in particular
\[
\mathrm{wt}_H(\mathbf{e}_2)=i_2+i_3\ \le\ \wtsr(\mathbf{e}).
\]
Thus, under
\[
\wtsr(\mathbf{e})\le\Big\lfloor\frac{d_{\mathrm{sr}}-1}{2}\Big\rfloor
\quad\text{and}\quad
d_2\ge d_{\mathrm{sr}},
\]
a bounded-minimum-distance decoder for $C_2$ can uniquely recover $\mathbf{a}_2$ and $\mathbf{e}_2$ from the second component $\mathbf{y}_2=\mathbf{a}_2+\mathbf{e}_2$.

The more delicate part is recovering $\mathbf{a}_1$ from $\mathbf{y}_1=\mathbf{a}_1+\mathbf{e}_1$ when some coordinates have both $e_{1,i}$ and $e_{2,i}$ nonzero. After subtracting $\mathbf{a}_2 x^2$ and rewriting the received word as
\[
\mathbf{y}'=\mathbf{y}-\mathbf{a}_2 x^2=(\mathbf{a}_1+\mathbf{e}_1)x+\mathbf{e}_2 x^2,
\]
one can view, at each $i\in I_3$, the local error as the linearized polynomial
\[
x\longmapsto e_{1,i}x+e_{2,i}x^2.
\]
Since this polynomial has rank~1, it has a nonzero root in $\F_4^\times=\{1,\omega,\omega^2\}$. This motivates evaluating $\mathbf{y}'$ at $x=1,\omega,\omega^2$ to obtain three quaternary words that, after a suitable coordinatewise scaling by
\[
\mathbf{1}=(1,\dots,1),\quad
\mathbf{u}=(\omega,\dots,\omega),\quad
\mathbf{u}^2=(\omega^2,\dots,\omega^2)\in\F_4^\ell,
\]
lie in $C_1$, $\mathbf{u}\cdot C_1$ and $\mathbf{u}^2\cdot C_1$, respectively. For each $i\in I_3$, at least one of these three evaluations cancels the local error, so the corresponding Hamming error pattern is strictly lighter. A simple averaging or counting argument then shows that, under the additional condition
\[
d_1\ \ge\ \tfrac{2}{3}d_{\mathrm{sr}},
\]
at least one of the three derived words lies within the Hamming decoding radius $\lfloor(d_1-1)/2\rfloor$ of $C_1$ (or of $\mathbf{u}\cdot C_1$ or $\mathbf{u}^2\cdot C_1$). The decoder therefore runs a bounded-minimum-distance decoder for $C_1$ on each of the three candidates and selects the uniquely consistent output, thereby recovering $\mathbf{a}_1$.

In summary, under
\[
d_2\ge d_{\mathrm{sr}}
\qquad\text{and}\qquad
d_1\ge\tfrac{2}{3}d_{\mathrm{sr}},
\]
their Theorem~4.1 shows that one can uniquely decode $\SR(C_1,C_2)$ up to $\lfloor(d_{\mathrm{sr}}-1)/2\rfloor$ sum-rank errors by one call to a decoder for $C_2$ followed by three calls to (equivalents of) a decoder for $C_1$. If $T_2$ and $T_1$ denote the complexities of the bounded-minimum-distance decoders for $C_2$ and $C_1$, respectively, the overall complexity is
\[
T_{\mathrm{CCQ}}=T_2+3T_1.
\]
When $C_1$ and $C_2$ are BCH or Goppa codes over $\F_4$, one has $T_1(\ell),T_2(\ell)=O(\ell^2)$, so the resulting sum-rank decoder also has complexity $O(\ell^2)$ over $\F_4$; see, e.g.,~\cite[Sec.~IV]{CCQ2025}. In particular, Chen-Cheng-Qi obtain explicit families of binary $2\times2$ sum-rank-metric codes that are both quadratically encodable and decodable, at the price of the additional constraint $d_1\ge\tfrac{2}{3}d_{\mathrm{sr}}$.

\newpage

\vspace{0.3cm}
{\bf C}.  \emph{ The open problem of Chen-Cheng-Qi}
\vspace{0.3cm}

Although the decoder of~\cite{CCQ2025} already achieves the optimal unique-decoding radius $\lfloor(d_{\mathrm{sr}}-1)/2\rfloor$ and quadratic complexity, it requires the extra inequality $d_1\ge\tfrac{2}{3}d_{\mathrm{sr}}$, which strongly constrains code design. In their conclusion, Chen-Cheng-Qi explicitly raised the following open question:

\vspace{0.3cm}
\emph{Can the decoding of the binary $2\times2$ sum-rank-metric codes $\SR(C_1,C_2)$ from Theorem~2.1 in~\cite{CCQ2025} be reduced to the decoders of $C_1$ and $C_2$ \emph{without} assuming $d_1\ge\tfrac{2}{3}d_{\mathrm{sr}}$}?

\vspace{0.3cm}
{\bf D}. \emph{ Our contributions}
\vspace{0.3cm}

In this paper we answer the above question in the affirmative. Our main result shows that the additional inequality $d_1\ge\tfrac{2}{3}d_{\mathrm{sr}}$ can be completely removed while preserving both the unique-decoding radius $\lfloor(d_{\mathrm{sr}}-1)/2\rfloor$ and the quadratic-time complexity. Under the single assumption $d_2\ge d_{\mathrm{sr}}$, we give a particularly simple two-step reduction from sum-rank decoding of $\SR(C_1,C_2)$ to Hamming-metric decoding of $C_1$ and $C_2$.
To describe the procedure, write the received word as $\mathbf{y}=\mathbf{c}+\mathbf{e}$, where $\mathbf{c}=\mathbf{a}_1 x+\mathbf{a}_2 x^2\in\SR(C_1,C_2)$ and $\mathbf{e}=\mathbf{e}_1 x+\mathbf{e}_2 x^2$.

(1) First uniquely decode $C_2$ from the second component
  $\mathbf{y}_2=\mathbf{a}_2+\mathbf{e}_2$ up to $\lfloor(d_2-1)/2\rfloor$ Hamming errors, recovering $\mathbf{a}_2$ and the error vector $\mathbf{e}_2$;

(2) Then use the support $J=\supp(\mathbf{e}_2)$ as an erasure pattern and perform a single error/erasure decoding of $C_1$.

A direct application of the weight identity~\eqref{eq:CCQ-weight} and the classical condition $2t+r<d_1$ for Hamming error-erasure decoding shows that this procedure succeeds whenever $\wtsr(\mathbf{e})\le\lfloor(d_{\mathrm{sr}}-1)/2\rfloor$. In other words, the condition $d_1\ge\tfrac{2}{3}d_{\mathrm{sr}}$ in Theorem~4.1 of~\cite{CCQ2025} is not needed for unique decoding up to half the minimum sum-rank distance.

The contributions of this work can be summarized as follows.
\begin{itemize}[leftmargin=1.5em]
  \item Answer to the open problem of Chen-Cheng-Qi.
  We prove that for every pair of  linear   quaternary codes $C_1,C_2\subseteq\F_4^\ell$ with $d_2\ge d_{\mathrm{sr}}(\SR(C_1,C_2))$, the code $\SR(C_1,C_2)$ is uniquely decodable up to $\lfloor(d_{\mathrm{sr}}-1)/2\rfloor$ sum-rank errors by a reduction to the Hamming decoders of $C_1$ and $C_2$ that does \emph{not} require $d_1\ge\tfrac{2}{3}d_{\mathrm{sr}}$.
  If $T_2$ and $T_1$ denote the complexities of the Hamming decoders for $C_2$ and $C_1$, respectively, the overall cost of our reduction is
  \[
  T_{\mathrm{SR}}=T_2+T_1+O(\ell).
  \]
  For BCH or Goppa instantiations over $\F_4$ one has $T_1(\ell),T_2(\ell)=O(\ell^2)$, so $\SR(C_1,C_2)$ admits $O(\ell^2)$ decoding with a smaller constant than in~\cite{CCQ2025}, and with a significantly enlarged design region for $(d_1,d_2)$.

  \item  Simplified design condition.
If $d_2\geq 2d_1$, then the decoder is guaranteed to succeed.
Note that since $d_{sr} \le 2d_1$,    the condition
$d_2 \ge 2d_1$   serves as a sufficient condition for Theorem \ref{thm:SR-decoding}.
This provides a simple design
 criterion: if the second component code   $C_2$  is chosen to have at least twice the
 Hamming distance of  $C_1$,
    the proposed two-step decoder is guaranteed to succeed
up to the full unique decoding radius, without needing to compute the exact sum-rank
distance $ d_{sr}$.

  \item Black-box optimality.
  We show that, in a natural black-box model where the cost of decoding $\SR(C_1,C_2)$ is measured only via the complexities $T_1(\ell)$ and $T_2(\ell)$ of the underlying Hamming decoders, our two-step reduction is optimal up to constant factors: any algorithm that uniquely decodes $\SR(C_1,C_2)$ up to half its minimum sum-rank distance must, in particular, be able to decode certain embedded Hamming subcodes of $C_1$ and $C_2$ to
  their Hamming unique-decoding radius $\lfloor(d_1-1)/2\rfloor$ and $\lfloor(d_2-1)/2\rfloor$, respectively.
\end{itemize}

\vspace{0.3cm}
{\bf E.}  \emph{Organization }
\vspace{0.3cm}

The rest of this paper is organized as follows. In Section 2, we review the necessary background on
Hamming-metric codes and error-erasure decoding, including the classical uniqueness condition $2t + r < d$.
Section 3 presents our main result: a two-step decoding procedure for $\mathrm{SR}(C_1, C_2)$ that
removes the restrictive assumption $d_1 \geq \frac{2}{3}d_{\mathrm{sr}}$.
We provide a detailed proof and illustrate the method with a concrete example.
Section 4 analyzes the complexity of the proposed algorithm and compares it with the decoder of
Chen-Cheng-Qi, highlighting the expanded design region for the constituent codes. In Section 5,
we discuss the black-box optimality of our reduction,
showing that any sum-rank decoder must inherently decode the underlying Hamming codes. Section 6 concludes the paper.

\section{Preliminaries}\label{sec:prelim-Hamming}

In this section we recall the standard Hamming error--erasure model and the
classical uniqueness condition $2t+r<d$ for linear codes.
This is one of the main
decoding principles we will need: the sum-rank part of the argument is handled
by the exact weight identity~\eqref{eq:CCQ-weight}, while all decoding
reductions are carried out purely in the Hamming metric.

\subsection{Linear codes and Hamming distance}

Let $\F_q$ be a finite field. A \emph{code} of length $n$ over $\F_q$ is any
subset $C\subseteq \F_q^n$. We say that $C$ is \emph{linear} if it is an
$\F_q$-linear subspace of $\F_q^n$. In this case, if $\dim_{\F_q}C=k$, we call
$C$ an $[n,k]_q$ linear code.
For a vector $\mathbf{x}=(x_1,\dots,x_n)\in \F_q^n$, its \emph{Hamming weight} is
$$
\mathrm{wt}_{H}(\mathbf{x})\;=\;|\{\,i : x_i\neq 0\,\}|,$$
and its support is
$$
\supp(\mathbf{x})\;=\;\{\,i : x_i\neq 0\,\}.
$$
The \emph{Hamming distance} between $x,y\in\F_q^n$ is
\[
d_{H}(\mathbf{x},\mathbf{y})\;=\;\mathrm{wt}_{H}(\mathbf{x}-\mathbf{y}).
\]
For a (not necessarily linear) code $C\subseteq\F_q^n$, its minimum Hamming
distance is
\[
d_{H}(C)
\;=\;
\min_{\substack{\mathbf{x},\mathbf{y}\in C\\\mathbf{x}\neq \mathbf{y}}} d_{H}(\mathbf{x},\mathbf{y}),
\]
and for a linear code $C$ this simplifies to
\[
d_{H}(C)
\;=\;
\min_{0\neq \mathbf{x}\in C} \mathrm{wt}_{H}(\mathbf{x}).
\]
We will write $d$ for $d_{H}(C)$ when the code $C$ is clear from the
context, and $[n,k,d]_q$ denotes a linear code over $\F_q$ of length $n$,
dimension $k$ and minimum distance $d$.

\subsection{Transmission with errors and erasures}

In the classical Hamming model, we assume that a codeword $\mathbf{c}\in C$ is
transmitted over a noisy channel and a vector $\mathbf{y}\in\F_q^n$ is received. The
\emph{error vector} is $\mathbf{e}=\mathbf{y}-\mathbf{c}$. Its Hamming weight $\mathrm{wt}_H(\mathbf{e})$ is the number of
positions where the received symbol differs from the transmitted symbol.

The error-erasure model refines this as follows. Besides corrupting symbols,
the channel (or a higher-level protocol) may mark some positions as unreliable,
indicating that the symbol there is unknown. Formally, we are given:
\begin{itemize}[leftmargin=1.5em]
  \item a codeword $\mathbf{c}\in C$;
  \item an error vector $\mathbf{e}\in\F_q^n$;
  \item an erasure set $J\subseteq\{1,\dots,n\}$.
\end{itemize}
The received vector $\mathbf{y}\in\F_q^n$ is defined by
\[
y_i=
\begin{cases}
c_i+e_i, & i\notin J,\\[0.3em]
\text{``erasure''}, & i\in J.
\end{cases}
\]
In positions $i\notin J$, we observe a concrete symbol $y_i$ but do not know
whether it was corrupted ($e_i\neq 0$). In positions $i\in J$, we only know
that the symbol is unreliable (we can think of $y_i$ as a distinguished
``$\mathsf{?}$'' symbol). The \emph{erasure number} is $r=|J|$.

The genuine unknown error positions in this model are those outside $J$ where
$e_i\neq 0$:
\[
E=\{\,i\notin J : e_i\neq 0\,\},\qquad t=|E|.
\]
We will refer to $t$ as the \emph{error number}. Note that errors inside $J$
are not counted in $t$; their effect has already been absorbed into the
erasure set.

\subsection{Uniqueness condition $2t+r<d$}

The following uniqueness criterion for Hamming error-and-erasure decoding is classical; see, e.g.,
\cite[Ch.~7]{HuffmanPless}, \cite[Ch.~1]{MacWilliamsSloane}, \cite[Ch.~2]{vanLint}. We include a proof for completeness.

\begin{lemma}\label{lem:erasure}
Let $C\subseteq\F_q^n$ be a linear code with minimum Hamming distance $d$.
Let $J\subseteq\{1,\dots,n\}$ be the erasure set with $r=|J|$, and suppose the number of errors
outside $J$ is $t$. If $2t+r<d$, then in the punctured code $C|_{\overline{J}}$ (whose minimum
distance is at least $d-r$), there is \emph{at most one} codeword within Hamming distance $t$
of the punctured received word $y|_{\overline{J}}$.
\end{lemma}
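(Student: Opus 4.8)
The plan is to reduce everything to two elementary facts about puncturing, followed by the classical half-minimum-distance sphere-packing argument, all carried out in the shortened ambient space $\F_q^{n-r}$.

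First I would record the simple consequence of the hypothesis that $2t+r<d$ forces $r<d$ (since $t\ge 0$), and use this to show that the puncturing map $C\to C|_{\overline{J}}$, $\mathbf{c}\mapsto\mathbf{c}|_{\overline{J}}$, is injective: if two codewords of $C$ agree on all coordinates outside $J$, then their difference is a codeword supported inside $J$, hence of Hamming weight at most $|J|=r<d$, and a codeword of weight strictly below the minimum distance must be zero. In particular $|C|_{\overline{J}}|=|C|$, so speaking of "a codeword of $C|_{\overline{J}}$" is the same as speaking of the corresponding codeword of $C$ restricted to the non-erased positions.

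Next I would bound the minimum distance of $C|_{\overline{J}}$ from below by $d-r$: for distinct $\mathbf{c},\mathbf{c}'\in C$, the difference $\mathbf{c}-\mathbf{c}'$ lies in $C$ and has $\wtH(\mathbf{c}-\mathbf{c}')\ge d$, and deleting the $r$ coordinates indexed by $J$ removes at most $r$ positions from its support, so $\wtH\big((\mathbf{c}-\mathbf{c}')|_{\overline{J}}\big)\ge d-r\ge 1$. Combined with the injectivity above, this gives $d_H(C|_{\overline{J}})\ge d-r$. Then the uniqueness claim follows by contradiction: if $\mathbf{c}_1|_{\overline{J}}$ and $\mathbf{c}_2|_{\overline{J}}$ are both within Hamming distance $t$ of $\mathbf{y}|_{\overline{J}}$ in $\F_q^{n-r}$, the triangle inequality gives
\[
d_H\big(\mathbf{c}_1|_{\overline{J}},\,\mathbf{c}_2|_{\overline{J}}\big)\ \le\ d_H\big(\mathbf{c}_1|_{\overline{J}},\,\mathbf{y}|_{\overline{J}}\big)+d_H\big(\mathbf{y}|_{\overline{J}},\,\mathbf{c}_2|_{\overline{J}}\big)\ \le\ 2t\ <\ d-r\ \le\ d_H(C|_{\overline{J}}),
\]
so the two punctured codewords cannot be distinct; hence $\mathbf{c}_1|_{\overline{J}}=\mathbf{c}_2|_{\overline{J}}$, and by injectivity $\mathbf{c}_1=\mathbf{c}_2$.

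I do not expect a genuine obstacle here, since this is a textbook fact; the only point requiring care is the bookkeeping between the original code $C\subseteq\F_q^n$ and the punctured code $C|_{\overline{J}}\subseteq\F_q^{n-r}$ — specifically, making explicit that the $r$ erased positions contribute nothing to the distance count (they have been deleted before decoding, by the definition of the error-erasure model) and that "uniqueness in $C|_{\overline{J}}$" transfers back to "uniqueness of the consistent codeword of $C$" precisely because puncturing is injective when $r<d$. Everything else is the standard sphere-packing bound applied to the weakened code of distance $d-r$.
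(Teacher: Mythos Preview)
Your proposal is correct and follows essentially the same route as the paper: puncture at $J$, note $d(C|_{\overline{J}})\ge d-r$, and apply the triangle inequality to get $d_H(\mathbf{c}_1|_{\overline{J}},\mathbf{c}_2|_{\overline{J}})\le 2t<d-r$. You simply fill in details (injectivity of puncturing when $r<d$, and the explicit $d-r$ bound) that the paper's two-line proof leaves implicit.
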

\begin{proof}
Puncture $C$ at $J$ to obtain $C'$ with $d(C')\ge d-r$. If two codewords $\mathbf{u}',\mathbf{v}'\in C'$
satisfy $d_H(\mathbf{u}',\mathbf{y}')\le t$ and $d_H(\mathbf{v}',\mathbf{y}')\le t$ for $\mathbf{y}'=\mathbf{y}|_{\overline{J}}$, then
$d_H(\mathbf{u}',\mathbf{v}')\le 2t<d(C')$, a contradiction.
\end{proof}

Lemma~\ref{lem:erasure} is purely a \emph{uniqueness} statement: it says that
if a decoder outputs any codeword consistent with the observed errors and
erasures, that codeword must be the transmitted one whenever $2t+r<d$. It does
not by itself specify how to construct such a decoder. In practice, many
algebraic decoders for BCH, Reed-Solomon, and Goppa codes can handle errors
and erasures by modifying the key-equation or syndrome equations appropriately;
see, for example,~\cite{HuffmanPless,MacWilliamsSloane,Sugiyama,vanLint}.

\section{Main Result and its Proof}\label{sec:main}

In this section we give a detailed  proof of the main decoding
result: under the single assumption $d_2\ge d_{\mathrm{sr}}$, the code
$\SR(C_1,C_2)$ can be uniquely decoded up to half its minimum sum-rank distance
by a two-step procedure that first uniquely decodes $C_2$ and then performs a
single error-erasure decoding of $C_1$ guided by the support of the recovered
error in the second component.


We adopt the notation and framework of~\cite{CCQ2025}. Let
$C_1=[\ell,k_1,d_1]_4$ and $C_2=[\ell,k_2,d_2]_4$ be quaternary codes and let
$\SR(C_1,C_2)$ denote the associated binary sum-rank-metric code of matrix size
$2\times 2$ as in the introduction. Recall that every codeword of $\SR(C_1,C_2)$
can be written as
\[
\mathbf{c}(x)=\mathbf{a}_1x+\mathbf{a}_2x^2,
\qquad \mathbf{a}_1,\mathbf{a}_2\in \F_4^\ell.
\]
We consider transmission of a codeword $\mathbf{c}(x)=\mathbf{a}_1x+\mathbf{a}_2x^2\in \SR(C_1,C_2)$ over a
binary sum-rank channel, with received word
\[
\mathbf{y}(x)=\mathbf{c}(x)+\mathbf{e}(x)=
(\mathbf{a}_1+\mathbf{e}_1)x+(\mathbf{a}_2+\mathbf{e}_2)x^2,
\]
where $\mathbf{e}(x)=\mathbf{e}_1x+\mathbf{e}_2x^2$ and $\mathbf{e}_1,\mathbf{e}_2\in \F_4^\ell$ are the error components.

Following Chen-Cheng-Qi~\cite{CCQ2025}, for each coordinate
$i\in\{1,\dots,\ell\}$ we inspect the error pair $(e_{1,i},e_{2,i})$ and classify
the index $i$ into one of three disjoint sets:
\begin{align*}
I_1 &= \{\, i : e_{1,i}\neq 0,\ e_{2,i}=0 \,\},\\
I_2 &= \{\, i : e_{1,i}=0,\ e_{2,i}\neq 0 \,\},\\
I_3 &= \{\, i : e_{1,i}\neq 0,\ e_{2,i}\neq 0 \,\}.
\end{align*}
We denote
\[
i_1=|I_1|,\qquad i_2=|I_2|,\qquad i_3=|I_3|.
\]

The classification reflects how each coordinate contributes to the sum-rank
weight:
\begin{itemize}[leftmargin=1.5em]
  \item at $i\in I_1$ only the first component is erroneous, and the associated
        $2\times 2$ matrix has rank $2$;
  \item at $i\in I_2$ only the second component is erroneous, and the associated
        matrix again has rank $2$;
  \item at $i\in I_3$ both components are nonzero, and the associated matrix
        has rank $1$.
\end{itemize}
This gives
\begin{equation}\label{eq:weights}
\wtsr(\mathbf{e})=2i_1+2i_2+i_3.
\end{equation}
On the other hand, the Hamming weights of the components are
\[
\mathrm{wt}_{H}(\mathbf{e}_1)=i_1+i_3
\qquad\text{and}\qquad
\mathrm{wt}_{H}(\mathbf{e}_2)=i_2+i_3,
\]
since $\supp(\mathbf{e}_1)=I_1\cup I_3$ and $\supp(\mathbf{e}_2)=I_2\cup I_3$.
We assume throughout this section that
\begin{equation}\label{eq:sr-radius-assumption}
\wtsr(\mathbf{e})\ \le\ \Big\lfloor \frac{d_{\mathrm{sr}}-1}{2}\Big\rfloor,
\end{equation}
where $d_{\mathrm{sr}}$ is the minimum sum-rank distance of $\SR(C_1,C_2)$.
The following observation, although simple, plays a crucial role in
developing our main results.
\begin{lemma}
\label{lem:upper_bound}
Let $C_1$ be an $[\ell, k_1, d_1]_4$ linear code and $C_2$ be an $[\ell, k_2, d_2]_4$ linear code. The minimum sum-rank distance of the associated binary code $\SR(C_1,C_2)$ satisfies
\begin{equation}\label{eq:lower-upper}
    d_{\mathrm{sr}}(\SR(C_1,C_2)) \le 2d_1.
\end{equation}
\end{lemma}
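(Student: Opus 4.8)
The plan is to exhibit an explicit nonzero codeword of $\SR(C_1,C_2)$ whose sum-rank weight equals $2d_1$, and then invoke the definition of the minimum sum-rank distance. The natural candidate comes from pairing a minimum Hamming-weight codeword of $C_1$ with the zero codeword of $C_2$.

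First I would choose $\mathbf{a}_1\in C_1$ with $\wtH(\mathbf{a}_1)=d_1$; such a codeword exists by definition of $d_1$, and in particular $\mathbf{a}_1\neq\mathbf{0}$. Since $C_2$ is $\F_4$-linear it contains $\mathbf{0}$, so the pair $(\mathbf{a}_1,\mathbf{0})\in C_1\times C_2$ gives rise to the codeword $\mathbf{c}(x)=\mathbf{a}_1x+\mathbf{0}\cdot x^2\in\SR(C_1,C_2)$. Because $\Phi$ is an $\F_2$-linear bijection and $\mathbf{a}_1\neq\mathbf{0}$, this $\mathbf{c}(x)$ is a nonzero codeword of $\SR(C_1,C_2)$.

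Next I would compute its sum-rank weight. Here $I=\supp(\mathbf{a}_1)\cap\supp(\mathbf{0})=\emptyset$, so the Chen--Cheng--Qi weight identity~\eqref{eq:CCQ-weight} yields $\wtsr(\mathbf{c})=2\wtH(\mathbf{a}_1)+2\wtH(\mathbf{0})-3\cdot 0=2d_1$. Equivalently, one may argue coordinatewise: at each $i\in\supp(\mathbf{a}_1)$ the local linearized polynomial is $x\mapsto a_{1,i}x$ with $a_{1,i}\neq 0$, which is an $\F_2$-linear automorphism of $\F_4$ and hence has matrix rank $2$, so the $d_1$ nonzero blocks each contribute $2$ to the sum-rank weight while the remaining blocks are zero. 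Either way, $d_{\mathrm{sr}}(\SR(C_1,C_2))\le\wtsr(\mathbf{c})=2d_1$, which is the claimed bound.

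I do not expect any substantial obstacle here. The only two points that need care are (i) confirming that the chosen codeword is genuinely nonzero, which is immediate since $\Phi$ is injective and $\mathbf{a}_1\neq\mathbf{0}$, and (ii) confirming that~\eqref{eq:CCQ-weight} applies verbatim when one of the two component vectors vanishes; this is fine because the derivation of that identity in~\cite{CCQ2025} imposes no nonvanishing hypothesis on $\mathbf{a}_1$ or $\mathbf{a}_2$. Combined with the already-recorded lower bound $d_{\mathrm{sr}}(\SR(C_1,C_2))\ge\min\{d_1,2d_2\}$, this lemma pins down the role of $d_1$ as the natural ceiling for the sum-rank distance and is exactly what is needed to make the simplified design condition $d_2\ge 2d_1$ meaningful.
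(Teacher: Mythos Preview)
Your proof is correct and follows essentially the same approach as the paper: both choose a minimum Hamming-weight codeword $\mathbf{a}_1\in C_1$, pair it with $\mathbf{0}\in C_2$, and apply the weight identity~\eqref{eq:CCQ-weight} to obtain a nonzero codeword of sum-rank weight $2d_1$. Your additional coordinatewise rank computation and the explicit checks that $\mathbf{c}(x)\neq 0$ and that~\eqref{eq:CCQ-weight} applies when one component vanishes are helpful clarifications but do not change the argument.
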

\begin{proof}
Recall that $\SR(C_1,C_2)$ is an $\mathbb{F}_2$-linear code, so its minimum sum-rank distance is equal to the minimum sum-rank weight of its nonzero codewords:
\[
    d_{\mathrm{sr}}(\SR(C_1,C_2)) = \min_{\mathbf{c} \in \SR(C_1,C_2), \mathbf{c} \ne 0} \mathrm{wt}_{\mathrm{sr}}(\mathbf{c}).
\]
Consider the specific subcode formed by taking the zero vector in the second component. Let $\mathbf{a}_2 = \mathbf{0} \in C_2$ (which exists since $C_2$ is linear) and let $\mathbf{a}_1 \in C_1$ be a codeword of minimum Hamming weight, i.e., $\textrm{wt}_H(\mathbf{a}_1) = d_1$.
Construct the codeword
\[
    \mathbf{c}^* = \mathbf{a}_1 x + \mathbf{a}_2 x^2 = \mathbf{a}_1 x \in \SR(C_1,C_2).
\]
According to the weight identity established in
\cite[Theorem 2.2]{CCQ2025} or Eq. \eqref{eq:CCQ-weight}, the sum-rank weight is given by
\[
   \mathrm{ wt}_{\mathrm{sr}}(\mathbf{a}_1 x + \mathbf{a}_2 x^2) = 2\mathrm{wt}_H(\mathbf{a}_1) + 2\mathrm{wt}_H(\mathbf{a}_2) - 3|\mathrm{supp}(\mathbf{a}_1) \cap \mathrm{supp}(\mathbf{a}_2)|.
\]
Substituting $\mathbf{a}_2 = \mathbf{0}$, we have $\mathrm{wt}_H(\mathbf{a}_2) = 0$
and $\mathrm{supp}(\mathbf{a}_1) \cap \mathrm{supp}(\mathbf{0}) = \emptyset$, so the intersection size is 0. Thus,
\[
    \mathrm{wt}_{\mathrm{sr}}(\mathbf{c}^*) = 2\mathrm{wt}_H(\mathbf{a}_1) + 0 - 0 = 2d_1.
\]
Since $d_{\mathrm{sr}}(\SR(C_1,C_2))$ is the minimum over all nonzero codewords, it must satisfy
\[
    d_{\mathrm{sr}}(\SR(C_1,C_2)) \le \mathrm{wt}_{\mathrm{sr}}(\mathbf{c}^*) = 2d_1,
\]
which completes the proof.
\end{proof}


As in Section~\ref{sec:prelim-Hamming}, for any received word for $C_1$ with
error vector $\mathbf{e}\in\F_q^\ell$ and erasure set $J\subseteq\{1,\dots,\ell\}$, we denote
by $t$ the number of erroneous positions outside $J$ and by $r$ the number of
erasures, so that the condition $2t+r<d_1$ is exactly the uniqueness condition
of Lemma~\ref{lem:erasure} for $C_1$.

We now state the main decoding theorem and prove it in several explicit steps.

\begin{theorem}\label{thm:SR-decoding}
Let $C_1=[\ell,k_1,d_1]_4$ and $C_2=[\ell,k_2,d_2]_4$ be quaternary linear
codes, and let $d_{\mathrm{sr}}$ be the minimum sum-rank distance of
$\SR(C_1,C_2)$. Assume
\[
d_2\ \ge\ d_{\mathrm{sr}}.
\]
Suppose further that:
\begin{itemize}[leftmargin=1.5em]
  \item there is a bounded-minimum-distance (BMD) decoder for $C_2$ that
        corrects up to $\big\lfloor \tfrac{d_2-1}{2}\big\rfloor$ Hamming errors;
  \item there is an error-erasure decoder for $C_1$ with the following
        guarantee: for every received word $\mathbf{z}\in\F_4^\ell$ and every
        erasure set $J\subseteq\{1,\dots,\ell\}$, if the (unknown) error vector
        has $t$ nonzero coordinates outside $J$ and $r=|J|$ erased positions
        with $2t+r<d_1$, then the decoder outputs the (necessarily unique)
        codeword of $C_1$ within Hamming distance $t$ on the nonerased positions (cf.\ Lemma~\ref{lem:erasure}).
\end{itemize}
Then for any transmitted $\mathbf{c}(x)=\mathbf{a}_1x+\mathbf{a}_2x^2\in\SR(C_1,C_2)$
and any error $\mathbf{e}(x)$ satisfying~\eqref{eq:sr-radius-assumption}, the
following two-step procedure recovers $\mathbf{c}(x)$ uniquely from
$\mathbf{y}(x)=\mathbf{c}(x)+\mathbf{e}(x)$:
\begin{enumerate}[leftmargin=1.5em]
  \item Step 1 (Decoding $C_2$). Extract the second component
        $\mathbf{y}_2=\mathbf{a}_2+\mathbf{e}_2\in\F_4^\ell$ and apply the BMD decoder for $C_2$ to obtain
        $\mathbf{a}_2$ and $\mathbf{e}_2=\mathbf{y}_2-\mathbf{a}_2$.
  \item Step 2 (Decoding $C_1$ with erasures). Form
        \[
        \mathbf{y}'(x)=\mathbf{y}(x)-\mathbf{a}_2x^2=
        (\mathbf{a}_1+\mathbf{e}_1)x+\mathbf{e}_2x^2
        \]
        and define the erasure set
        \[
        J:=\supp(\mathbf{e}_2)=\{\,i : e_{2,i}\neq 0\,\}.
        \]
        Apply the error-erasure decoder for $C_1$ to the first component of
        $\mathbf{y}'$, treating all positions in $J$ as erasures. This yields $\mathbf{a}_1$.
\end{enumerate}
In particular, $\SR(C_1,C_2)$ is uniquely decodable up to radius
$\big\lfloor \tfrac{d_{\mathrm{sr}}-1}{2}\big\rfloor$ in the sum-rank metric.
\end{theorem}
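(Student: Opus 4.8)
The plan is to verify that the two claimed steps both fall within the classical uniqueness radii they rely on, using only the exact weight identity \eqref{eq:CCQ-weight} (equivalently \eqref{eq:weights}) and Lemma~\ref{lem:erasure}. Recall from \eqref{eq:weights} that $\wtsr(\mathbf{e})=2i_1+2i_2+i_3$, while $\wtH(\mathbf{e}_1)=i_1+i_3$ and $\wtH(\mathbf{e}_2)=i_2+i_3$. Throughout we abbreviate $t_{\mathrm{sr}}=\wtsr(\mathbf{e})\le\lfloor(d_{\mathrm{sr}}-1)/2\rfloor$, so $2t_{\mathrm{sr}}<d_{\mathrm{sr}}$.

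First I would handle Step~1. Since $\wtH(\mathbf{e}_2)=i_2+i_3\le 2i_2+i_3\le 2i_1+2i_2+i_3=\wtsr(\mathbf{e})$, we get $2\wtH(\mathbf{e}_2)\le 2\wtsr(\mathbf{e})<d_{\mathrm{sr}}\le d_2$, hence $\wtH(\mathbf{e}_2)\le\lfloor(d_2-1)/2\rfloor$. Therefore the BMD decoder for $C_2$, applied to $\mathbf{y}_2=\mathbf{a}_2+\mathbf{e}_2$, returns the unique codeword within that Hamming radius, which is $\mathbf{a}_2$; subtracting gives $\mathbf{e}_2$ and hence $J=\supp(\mathbf{e}_2)=I_2\cup I_3$ exactly. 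This also justifies forming $\mathbf{y}'(x)=\mathbf{y}(x)-\mathbf{a}_2x^2=(\mathbf{a}_1+\mathbf{e}_1)x+\mathbf{e}_2x^2$ in Step~2.

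Next I would analyze Step~2, which is the crux of the argument. Viewing the first component of $\mathbf{y}'$ as a received word for $C_1$ with erasure set $J=I_2\cup I_3$, the number of erasures is $r=|J|=i_2+i_3$, and the number of genuine errors outside $J$ is the number of coordinates $i\notin J$ with $a_{1,i}+e_{1,i}\ne a_{1,i}$, i.e.\ with $e_{1,i}\ne 0$ and $i\notin I_2\cup I_3$; by the definition of the $I_j$ this is exactly $I_1$, so $t=i_1$. Then
\[
2t+r \;=\; 2i_1+i_2+i_3 \;\le\; 2i_1+2i_2+i_3 \;=\; \wtsr(\mathbf{e}) \;\le\; \Big\lfloor\tfrac{d_{\mathrm{sr}}-1}{2}\Big\rfloor \;<\; \tfrac{d_{\mathrm{sr}}}{2}.
\]
To close the argument I need $2t+r<d_1$, so the remaining point is to compare $d_{\mathrm{sr}}/2$ with $d_1$. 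Here is where Lemma~\ref{lem:upper_bound} enters: it gives $d_{\mathrm{sr}}\le 2d_1$, hence $2t+r<d_{\mathrm{sr}}/2\le d_1$. (One should double-check the strictness: $2t+r$ is an integer and $2t+r\le 2t_{\mathrm{sr}}\le d_{\mathrm{sr}}-1\le 2d_1-1<2d_1$, but we actually need $2t+r<d_1$; the cleaner route is $2t+r=2i_1+i_2+i_3\le i_1+(i_1+i_2)+i_3\le i_1+i_2+i_3+\max(i_1,i_2)$ — better to just use $2(2t+r)\le 2\wtsr(\mathbf{e})\le d_{\mathrm{sr}}-1\le 2d_1-1$, so $2t+r\le d_1-1<d_1$, using that $2t+r$ and $d_1$ are integers.) With $2t+r<d_1$ established, Lemma~\ref{lem:erasure} guarantees that there is at most one codeword of $C_1$ within Hamming distance $t$ on the nonerased coordinates; since $\mathbf{a}_1$ is such a codeword, the error--erasure decoder returns $\mathbf{a}_1$. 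Combining with Step~1 recovers $\mathbf{c}(x)=\mathbf{a}_1x+\mathbf{a}_2x^2$, proving unique decodability up to radius $\lfloor(d_{\mathrm{sr}}-1)/2\rfloor$.

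The main obstacle is not any single estimate but making sure the bookkeeping is airtight: that the erasure set produced in Step~1 is \emph{exactly} $I_2\cup I_3$ (so that the errors surviving into Step~2 are precisely those on $I_1$, with no hidden contribution from $I_3$), and that the chain of inequalities yields the strict integer inequality $2t+r<d_1$ rather than merely $2t+r<d_1+\tfrac12$. Both are handled by the exact identity \eqref{eq:weights} together with Lemma~\ref{lem:upper_bound}; once $d_{\mathrm{sr}}\le 2d_1$ is in hand, no further hypothesis on $d_1$ — in particular nothing like $d_1\ge\tfrac23 d_{\mathrm{sr}}$ — is required, which is exactly the point of the theorem. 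A brief remark at the end would note that if one only assumes the sufficient condition $d_2\ge 2d_1$, then since $d_{\mathrm{sr}}\le 2d_1\le d_2$ the hypothesis $d_2\ge d_{\mathrm{sr}}$ is automatic, giving the advertised design criterion.
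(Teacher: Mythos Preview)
Your proposal is correct and follows essentially the same route as the paper: decompose the error via $I_1,I_2,I_3$, bound $\wtH(\mathbf{e}_2)\le\wtsr(\mathbf{e})$ to validate Step~1, identify $J=I_2\cup I_3$, $t=i_1$, $r=i_2+i_3$, bound $2t+r\le\wtsr(\mathbf{e})\le\lfloor(d_{\mathrm{sr}}-1)/2\rfloor$, and then invoke Lemma~\ref{lem:upper_bound} ($d_{\mathrm{sr}}\le 2d_1$) together with Lemma~\ref{lem:erasure} to conclude. The parenthetical hand-wringing over strictness is unnecessary: your first chain $2t+r\le\lfloor(d_{\mathrm{sr}}-1)/2\rfloor<d_{\mathrm{sr}}/2\le d_1$ already gives $2t+r<d_1$ directly, which is exactly how the paper argues it (they phrase it as $\lfloor(d_{\mathrm{sr}}-1)/2\rfloor\le d_1-1<d_1$).
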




\begin{proof}
We keep the notation introduced above:
$\mathbf{y}(x)=\mathbf{c}(x)+\mathbf{e}(x)$ with
$\mathbf{c}(x)=\mathbf{a}_1x+\mathbf{a}_2x^2$, $\mathbf{e}(x)=\mathbf{e}_1x+\mathbf{e}_2x^2$, and $I_1,I_2,I_3$,
$i_1,i_2,i_3$ as defined.

Step~1: Control of the error in $C_2$.
By~\eqref{eq:weights} and the assumption~\eqref{eq:sr-radius-assumption} we have
\[
\wtsr(\mathbf{e})=2i_1+2i_2+i_3\ \le\ \Big\lfloor \frac{d_{\mathrm{sr}}-1}{2}\Big\rfloor.
\]
From this and $\mathrm{wt}_H(\mathbf{e}_2)=i_2+i_3$ we obtain the simple chain
\begin{align*}
\mathrm{wt}_H(\mathbf{e}_2)
&= i_2+i_3\\
&\le 2i_2+i_3\\
&= \wtsr(\mathbf{e})-2i_1\\
&\le \wtsr(\mathbf{e})\\
&\le \Big\lfloor \frac{d_{\mathrm{sr}}-1}{2}\Big\rfloor.
\end{align*}
Using the assumption $d_2\ge d_{\mathrm{sr}}$, we further deduce
\[
\Big\lfloor \frac{d_{\mathrm{sr}}-1}{2}\Big\rfloor
\le \Big\lfloor \frac{d_2-1}{2}\Big\rfloor,
\]
so
\begin{equation}\label{eq:e2-bound}
\mathrm{wt}_H(\mathbf{e}_2)\ \le\ \Big\lfloor \frac{d_2-1}{2}\Big\rfloor.
\end{equation}

Step~2: Unique decoding of $C_2$.
The second component of $\mathbf{y}(x)$ is $\mathbf{y}_2=\mathbf{a}_2+\mathbf{e}_2$. By~\eqref{eq:e2-bound}, the
Hamming error vector $\mathbf{e}_2$ has weight within the guaranteed decoding radius of
the BMD decoder for $C_2$. Therefore Step~1 of the algorithm successfully and
uniquely recovers the correct codeword $\mathbf{a}_2\in C_2$ and the error vector
$\mathbf{e}_2=\mathbf{y}_2-\mathbf{a}_2$.

Step~3: Erasure pattern for $C_1$.
After Step~2 we know $\mathbf{a}_2$ and $\mathbf{e}_2$. Subtracting $\mathbf{a}_2x^2$ from $\mathbf{y}(x)$ we
form
\[
\mathbf{y}'(x)=\mathbf{y}(x)-\mathbf{a}_2x^2
=(\mathbf{a}_1+\mathbf{e}_1)x+\mathbf{e}_2x^2.
\]
We now use the support of $\mathbf{e}_2$ as an erasure pattern. Define
\[
J:=\supp(\mathbf{e}_2)=\{\,i : e_{2,i}\neq 0\,\}=I_2\cup I_3,
\]
so the number of erasures is
\[
r=|J|=|I_2|+|I_3|=i_2+i_3.
\]
In positions $i\in J$ we declare the symbol of the first component as erased
(i.e., ``untrusted''), regardless of whether an error actually occurred in
$e_{1,i}$. This is conservative but allowed: an erasure only indicates that we
do not wish to use the value $y_{1,i}$ as a constraint in decoding.

In positions $i\notin J$ we have $e_{2,i}=0$, so such indices lie in $I_1$ or
outside $I_1\cup I_2\cup I_3$. Among these, the true (unknown) errors in the
first component are precisely the indices where $e_{1,i}\neq 0$, namely the set
$I_1$. Thus the Hamming error number for decoding $C_1$ is
\[
t=|\{\,i\notin J : e_{1,i}\neq 0\,\}| = |I_1| = i_1.
\]

Step~4: Bounding $2t+r$ via the sum-rank weight.
Using the expressions for $t$ and $r$ we compute
\begin{align*}
2t+r
&=2i_1+(i_2+i_3)\\
&=(2i_1+2i_2+i_3) - i_2\\
&= \wtsr(\mathbf{e})-i_2,
\end{align*}
where we used~\eqref{eq:weights}. Since $i_2\ge 0$, it follows that
\begin{equation}\label{eq:2tplusr-bound}
2t+r\ \le\ \wtsr(\mathbf{e})\ \le\ \Big\lfloor \frac{d_{\mathrm{sr}}-1}{2}\Big\rfloor.
\end{equation}

To turn this into a bound in terms of $d_1$, we recall the upper bound
$d_{\mathrm{sr}}\le 2d_1$ from~\eqref{eq:lower-upper}. This implies
\[
\frac{d_{\mathrm{sr}}-1}{2}\ \le\ d_1-\frac{1}{2}
\quad\Rightarrow\quad
\Big\lfloor \frac{d_{\mathrm{sr}}-1}{2}\Big\rfloor\ \le\ d_1-1 \ <\ d_1.
\]
Combining this with~\eqref{eq:2tplusr-bound} yields
\begin{equation}\label{eq:2tplusr<d1}
2t+r\ <\ d_1.
\end{equation}

Step~5: Unique decoding of $C_1$ with erasures.
We now apply the error-erasure decoder for $C_1$ to the first component (the
coefficients of $x$ in $\mathbf{y}'(x)$), with erasure set $J$ and unknown errors counted
by $t$. By Lemma~\ref{lem:erasure}, the condition $2t+r<d_1$ in~\eqref{eq:2tplusr<d1}
guarantees that there is a unique codeword $\mathbf{a}_1\in C_1$
whose restriction to $\overline{J}$
is within Hamming distance $t$ of
$(\mathbf{y}'(x))_1$;
the error-erasure decoder returns this unique nearest neighbor.
 Thus Step~3 of the algorithm succeeds and recovers
$\mathbf{a}_1$ uniquely.

Step~6: Uniqueness and decoding radius.
The argument above shows that whenever $\wtsr(\mathbf{e})\le \lfloor (d_{\mathrm{sr}}-1)/2\rfloor$, Steps~1 and~3 recover the original pair $(\mathbf{a}_1,\mathbf{a}_2)$ and therefore the original codeword $\mathbf{c}(x)$ uniquely. Hence the two-step algorithm is a valid unique decoder for $\SR(C_1,C_2)$ up to radius $\lfloor (d_{\mathrm{sr}}-1)/2\rfloor$ in the sum-rank metric.
\end{proof}

\begin{remark}
If $d_1\ge d_{\mathrm{sr}}$, then $\wtH(\mathbf{e}_1)\le \wtsr(\mathbf{e})\le\lfloor(d_{\mathrm{sr}}-1)/2\rfloor \le \lfloor(d_1-1)/2\rfloor$, so the roles of $C_1$ and $C_2$ in Theorem~\ref{thm:SR-decoding} can be interchanged: first uniquely decode $C_1$, then use $\supp(\mathbf{e}_1)$ as the erasure set for $C_2$. The proof is entirely analogous.
\end{remark}

\smallskip
We conclude this section with an illustrated example.

\smallskip
\begin{example}
We fix the quadratic extension $\F_4=\{0,1,\omega,\omega^2\}$ over $\F_2$ with
$\omega^2=\omega+1$ (so $\omega^3=1$). Let the block length be $\ell=4$ and let the evaluation set be
\[
T=(t_1,t_2,t_3,t_4)=(0,1,\omega,\omega^2).
\]

Let $C_1\subseteq\F_4^{4}$ be the Reed--Solomon code of dimension $k_1=2$ obtained by evaluating all polynomials $f\in\F_4[t]$ of degree $\le 1$ at $T$:
  \[
  C_1=\{(f(t_1),f(t_2),f(t_3),f(t_4)):\deg f\le 1\}.
  \]
  Since this is MDS, its minimum distance is $d_1=n-k_1+1=4-2+1=3$.

  Let $C_2\subseteq\F_4^{4}$ be the constant code (the Reed-Solomon code of dimension $k_2=1$), i.e.,
  \[
  C_2=\{(a,a,a,a):a\in\F_4\}.
  \]
Its minimum distance is $d_2=4$.

The associated sum-rank code $\mathrm{SR}(C_1,C_2)$ over $\F_2$ has block length $\ell=4$, matrix size $2\times2$, and dimension $2(k_1+k_2)=6$. From general bounds we have
\[
\min\{d_1,2d_2\}=3\ \le\ d_{\mathrm{sr}}(\mathrm{SR}(C_1,C_2))\ \le\ 2d_1=6.
\]

Choose
\[
f_1(t)=1+\omega t\quad\Rightarrow\quad
\mathbf{a}_1=(f_1(0),f_1(1),f_1(\omega),f_1(\omega^2))=(1,\,1+\omega,\,\omega,\,0),
\]
and choose $\mathbf{a}_2=(\omega,\omega,\omega,\omega)\in C_2$.
The transmitted sum-rank codeword is
\[
\mathbf{c}(x)=\mathbf{a}_1x+\mathbf{a}_2x^2.
\]

Let the error be
\[
\mathbf{e}_1=(0,0,1,0),\qquad
\mathbf{e}_2=(0,0,\omega,0),
\]
so that the received word is $\mathbf{y}(x)=\mathbf{c}(x)+\mathbf{e}(x)$ with $\mathbf{e}(x)=\mathbf{e}_1x+\mathbf{e}_2x^2$.
Coordinate-wise:
\[
\mathbf{y}_1=\mathbf{a}_1+\mathbf{e}_1=(1,\,1+\omega,\,\omega+1,\,0),\qquad
\mathbf{y}_2=\mathbf{a}_2+\mathbf{e}_2=(\omega,\,\omega,\,0,\,\omega).
\]

Here
\[
I_1=\{i:\ e_{1,i}\ne0,\ e_{2,i}=0\}=\varnothing,\quad
I_2=\{i:\ e_{1,i}=0,\ e_{2,i}\ne0\}=\varnothing,\quad
I_3=\{i:\ e_{1,i}\ne0,\ e_{2,i}\ne0\}=\{3\}.
\]
Hence $i_1=i_2=0$, $i_3=1$, and by the $2\times2$ rank table
\[
\wtsr(\mathbf{e})=2i_1+2i_2+i_3=1.
\]
Indeed, at coordinate $3$ the local linearized map is $L(x)=1\cdot x+\omega x^2$, which has the nonzero root $x=\omega^{-1}=\omega^2$, so its associated $2\times2$ binary matrix has rank~$1$.


Now we decode $\mathrm{SR}(C_1,C_2)$ by the two-step procedure as follows.

Step 1: Decode $C_2$.
We must uniquely decode $\mathbf{y}_2=(\omega,\omega,0,\omega)$ in $C_2$ up to $\lfloor(d_2-1)/2\rfloor=\lfloor 3/2\rfloor=1$ Hamming error.
The nearest constant vector is
\[
\mathbf{a}_2=(\omega,\omega,\omega,\omega),\qquad
\mathbf{e}_2=\mathbf{y}_2-\mathbf{a}_2=(0,0,\omega,0),
\]
with $\mathrm{wt}_H(\mathbf{e}_2)=1\le 1$, so the $C_2$ decoder outputs $\mathbf{a}_2$ and $\mathbf{e}_2$ correctly.

Step 2: Decode $C_1$ with erasures.
Form
\[
\mathbf{y}'(x)=\mathbf{y}(x)-\mathbf{a}_2x^2=(\mathbf{a}_1+\mathbf{e}_1)x+\mathbf{e}_2x^2.
\]
Use the erasure set $J=\supp(\mathbf{e}_2)=\{3\}$ (so $r=|J|=1$). The unknown Hamming errors outside $J$ in the first component are counted by
\[
t=\big|\{i\notin J: e_{1,i}\ne 0\}\big|=0,
\]
hence $2t+r=1<d_1=3$ and uniqueness is guaranteed.

Concretely, we must recover $\mathbf{a}_1\in C_1$ from $\mathbf{y}_1=\mathbf{a}_1+\mathbf{e}_1$ knowing that coordinate $3$ is erased and $t=0$ elsewhere. Since $C_1$ consists of evaluations of linear polynomials $f(t)=\alpha+\beta t$ at $T=(0,1,\omega,\omega^2)$, we interpolate $\alpha,\beta$ from the three nonerased positions:
\[
f(0)=\alpha=1,\qquad f(1)=\alpha+\beta=1+\omega\ \Rightarrow\ \beta=\omega.
\]
Check the third constraint, consistency on the fourth coordinate:
\[
f(\omega^2)=\alpha+\beta\omega^2=1+\omega\omega^2=1+\omega^3=1+1=0,
\]
which matches the observed $y_{1,4}=0$. Therefore the unique solution is
\[
\mathbf{a}_1=(f(0),f(1),f(\omega),f(\omega^2))=(1,\,1+\omega,\,\omega,\,0),
\]
exactly the transmitted $\mathbf{a}_1$.

Finally output the decoding result. The decoder returns
\[
\widehat{\mathbf{c}}(x)=\mathbf{a}_1x+\mathbf{a}_2x^2=(1,1+\omega,\omega,0)\,x+(\omega,\omega,\omega,\omega)\,x^2,
\]
which equals the transmitted codeword.
\end{example}

\begin{remark}
We had $\wtsr(\mathbf{e})=1$, whereas $d_{\mathrm{sr}}(\mathrm{SR}(C_1,C_2))\ge \min\{d_1,2d_2\}=3$, hence
\[
\wtsr(\mathbf{e})\le \Big\lfloor\frac{d_{\mathrm{sr}}-1}{2}\Big\rfloor.
\]
Moreover, $\mathrm{wt}_H(\mathbf{e}_2)=1\le \lfloor(d_2-1)/2\rfloor$ so Step~1 succeeds, and $2t+r=1<d_1=3$ so Step~2 succeeds. This illustrates the complete two-step decoding on a fully explicit instance. Thus this example is within the guaranteed radius.
\end{remark}

\section{Algorithm and Complexity}
As shown in Section 3, Theorem \ref{thm:SR-decoding} provides the theoretical foundation for the decoding strategy described in Algorithm~1.
Specifically, Theorem \ref{thm:SR-decoding} proves that, under the assumption $d_2 \ge d_{\mathrm{sr}}$ and for any error vector
$e(x)$ satisfying $w_{\mathrm{sr}}(e)\le \left\lfloor (d_{\mathrm{sr}}-1)/2 \right\rfloor$, the transmitted
codeword $c(x)=a_1x+a_2x^2 \in \textrm{SR}(C_1,C_2)$ can be uniquely recovered by a two-step procedure.
Algorithm~1 is precisely the constructive realization of this result. In the first step, the algorithm applies a
bounded-minimum-distance decoder to the second component $y_2=a_2+e_2$ in order to recover $a_2$ and
the error vector $e_2$. In the second step, the support of $e_2$ is used as an erasure set for the first
component, and an error-erasure decoder for $C_1$ is invoked to recover $a_1$. The correctness of this
procedure follows directly from the inequality $2t+r<d_1$, which is derived in the proof of Theorem \ref{thm:SR-decoding}
from the bound on the sum-rank weight of the error. Consequently, Theorem \ref{thm:SR-decoding} establishes the correctness
and decoding radius of the algorithm, while Algorithm~1 provides an explicit implementation of the
two-step erasure-guided decoding method for $\textrm{SR}(C_1,C_2)$. See the statement and proof of
Theorem \ref{thm:SR-decoding} and the description of Algorithm~1.


\begin{algorithm}[t]
\caption{Two-Step Erasure-Guided Decoding for $\SR(C_1,C_2)$}
\label{alg:twostep}
\begin{algorithmic}[1]
\State \textbf{Input:} $\mathbf{y}=(\mathbf{y}_1,\mathbf{y}_2)\in \F_4^\ell\times \F_4^\ell$; decoders for $C_1$ and $C_2$.
\State \textbf{Assume:} $d_2\ge d_{\mathrm{sr}}$ for the target code.
\State \textbf{Step 1 (Decode $C_2$):} Run a BMD decoder on $\mathbf{y}_2$ to get $\mathbf{c}_2$ and $\mathbf{e}_2=\mathbf{y}_2-\mathbf{c}_2$.
\State \textbf{Step 2 (Erasures):} Set $J\gets \supp(\mathbf{e}_2)$.
\State \textbf{Step 3 (Decode $C_1$):} Decode $\mathbf{y}_1$ in $C_1$ using $J$ as erasures to obtain $\mathbf{c}_1$.
\State \textbf{Output:} $\hat{\mathbf{c}}(x)=\mathbf{c}_1x+\mathbf{c}_2x^2$.
\end{algorithmic}
\end{algorithm}

Let $T_2$ be the complexity of the $C_2$ BMD decoder and $T_1$ the complexity of the $C_1$ error--erasure decoder. The only asymptotically nontrivial work is one call to each decoder; all other operations (forming $\mathbf{y}'$, computing $\supp(\mathbf{e}_2)$, simple bookkeeping) are linear in $\ell$ and negligible. Hence
\[
T_{\mathrm{sum\text{-}rank}}=T_2+T_1+O(\ell)\approx T_2+T_1.
\]
For BCH or Goppa codes over $\F_4$ of length $\ell$, both decoders can be implemented in $O(\ell^2)$ operations (e.g.,~\cite{HuffmanPless,Sugiyama}), yielding
\[
T_{\mathrm{sum\text{-}rank}}(\ell)=O(\ell^2).
\]

We next compare our decoder with that of Chen-Cheng-Qi
Recall that the fast decoder of Chen-Cheng-Qi~\cite{CCQ2025} first decodes $C_2$ and then runs three bounded-minimum-distance decoders for $C_1$ or its scalar multiples $u\cdot C_1$ and $u^2\cdot C_1$. If $T_2(\ell)$ and $T_1(\ell)$ denote the complexities of the decoders for $C_2$ and $C_1$, respectively, their algorithm has overall cost
\[
 T_{\mathrm{CCQ}}(\ell) = T_2(\ell) + 3 T_1(\ell).
\]
By contrast, Algorithm~1 invokes the $C_2$ decoder once and an error/erasure decoder for $C_1$ once, so that
\[
 T_{\mathrm{SR}}(\ell) = T_2(\ell) + T_1(\ell) + O(\ell).
\]
In particular,
\[
 T_{\mathrm{SR}}(\ell) = T_{\mathrm{CCQ}}(\ell) - 2 T_1(\ell) + O(\ell),
\]
and for families where $T_1(\ell)$ and $T_2(\ell)$ grow quadratically in~$\ell$ (for example, for BCH or Goppa instantiations over $\mathbb{F}_4$) the difference $2T_1(\ell)$ is already of order~$\ell^2$. Writing
\[
 T_j(\ell) = \kappa_j \ell^2 + O(\ell), \qquad j\in\{1,2\},
\]
we obtain
\[
 T_{\mathrm{CCQ}}(\ell) = (3\kappa_1 + \kappa_2)\,\ell^2 + O(\ell),\qquad
 T_{\mathrm{SR}}(\ell) = (\kappa_1 + \kappa_2)\,\ell^2 + O(\ell).
\]
Thus the leading constant in front of $\ell^2$ is reduced from $3\kappa_1+\kappa_2$ to $\kappa_1+\kappa_2$. In the common situation where the decoders for $C_1$ and $C_2$ have comparable cost (so that $\kappa_1\approx \kappa_2$), the contribution of the $C_1$-decoder to the overall complexity is essentially reduced by a factor of~three, and the total quadratic constant is roughly halved.

Additionally, consider the design region in the $(d_1,d_2)$-plane.
Let $d_{\mathrm{sr}}$ denote the minimum sum-rank distance of $\textrm{SR}(C_1,C_2)$. The decoder of~\cite{CCQ2025} guarantees unique decoding up to $\lfloor(d_{\mathrm{sr}}-1)/2\rfloor$ whenever
\[
 d_2 \ge d_{\mathrm{sr}}
 \qquad\text{and}\qquad
 d_1 \ge \frac{2}{3}d_{\mathrm{sr}}.
\]
Our Theorem~\ref{thm:SR-decoding} removes the second inequality: under the single condition $d_2 \ge d_{\mathrm{sr}}$ the two-step erasure-guided procedure still decodes up to half the sum--rank distance. In other words, for fixed $d_{\mathrm{sr}}$ the admissible region for the pair $(d_1,d_2)$ is enlarged from
\[
 \{(d_1,d_2): d_2 \ge d_{\mathrm{sr}},\ d_1 \ge \tfrac{2}{3}d_{\mathrm{sr}}\}
\]
to
\[
 \{(d_1,d_2): d_2 \ge d_{\mathrm{sr}}\}.
\]

To make the enlargement more transparent, it is convenient to normalize by $d_{\mathrm{sr}}$ and write
\[
 \delta_1 = \frac{d_1}{d_{\mathrm{sr}}}, \qquad
 \delta_2 = \frac{d_2}{d_{\mathrm{sr}}}.
\]
From the general upper bound $d_{\mathrm{sr}}(\textrm{SR}(C_1,C_2)) \le 2d_1$ we always have $\delta_1\ge \tfrac12$. In these normalized coordinates the condition of~\cite{CCQ2025} reads
\[
 \delta_2 \ge 1,\qquad \delta_1 \ge \frac{2}{3},
\]
whereas Theorem~\ref{thm:SR-decoding} only requires
\[
 \delta_2 \ge 1,\qquad \delta_1 \ge \frac{1}{2}.
\]
Thus there is an entire strip $1/2 \le \delta_1 < 2/3$ for which $\textrm{SR}(C_1,C_2)$ falls within the guarantee of our decoder (provided $d_2\ge d_{\mathrm{sr}}$),
but not within that of the decoder in ~\cite{CCQ2025}.
Equivalently, for a given target sum--rank distance $d_{\mathrm{sr}}$ one may take
\[
 \frac{1}{2}d_{\mathrm{sr}} \le d_1 < \frac{2}{3}d_{\mathrm{sr}}
\]
without sacrificing unique decodability of $\textrm{SR}(C_1,C_2)$ up to half the minimum distance. This extra freedom allows $C_1$ to be chosen with smaller Hamming distance (and hence potentially higher dimension) than was permitted in~\cite{CCQ2025}, yielding a significantly larger design space for the pair $(d_1,d_2)$.

Finally, note that in many constructions the exact value of $d_{\mathrm{sr}}$ is not known a priori. In this situation the inequality $d_2 \ge d_{\mathrm{sr}}$ can be enforced by the simple sufficient condition
\[
 d_2 \ge 2d_1,
\]
since the bound $d_{\mathrm{sr}}(\textrm{SR}(C_1,C_2))\le 2d_1$ always holds. Hence choosing $C_1$ and $C_2$ so that $d_2\ge 2d_1$ automatically places $(d_1,d_2)$ inside the enlarged design region where our two-step decoder is guaranteed to succeed up to its full unique decoding radius, without the need to compute $d_{\mathrm{sr}}$ explicitly.


\section{On the Optimality of the Two-Step Reduction}

We show that the two-step erasure-guided reduction of Theorem~\ref{thm:SR-decoding} is essentially optimal in a natural model where the decoding cost for $\SR(C_1,C_2)$ is measured solely via access to Hamming decoders for $C_1$ and $C_2$.

\subsection{From sum-rank decoding to Hamming decoding}
Consider the subcode $S_1:=\{\mathbf{a}_1x: \mathbf{a}_1\in C_1\}\subseteq \SR(C_1,C_2)$.
Since $\wtsr(\mathbf{a}_1x)=2\, \mathrm{wt}_H(\mathbf{a}_1)$ for every $\mathbf{a}_1\neq 0$, we have
$d_{\mathrm{sr}}(S_1)=2d_1$.
Now fix an SR decoder for $\SR(C_1,C_2)$ that guarantees unique decoding up to
$\big\lfloor\frac{d_{\mathrm{sr}}(\SR(C_1,C_2))-1}{2}\big\rfloor$.
For an error of the form $\mathbf{e}(x)=\mathbf{e}_1x$ we get
$\wtsr(\mathbf{e})=2\mathrm{wt}_H(\mathbf{e}_1)$; hence this SR guarantee implies
\[
\mathrm{wt}_H(\mathbf{e}_1)\le \Big\lfloor\frac{d_{\mathrm{sr}}(\SR(C_1,C_2))-1}{4}\Big\rfloor.
\]
In particular, if we restrict to families where $d_{\mathrm{sr}}(\SR(C_1,C_2))=2d_1$
(respectively $=2d_2$), then any such SR decoder necessarily solves unique
Hamming decoding for $C_1$ (respectively $C_2$) up to
$\big\lfloor\frac{d_1-1}{2}\big\rfloor$.

\subsection{Optimality under black-box Hamming decoders}
In Theorem~\ref{thm:SR-decoding}, our algorithm calls the $C_2$ BMD decoder once and the $C_1$ error-erasure decoder once. Hence, up to lower-order terms,
\[
T_{\mathrm{sum\text{-}rank}}(\ell)=T_2(\ell)+T_1(\ell)+O(\ell)\approx T_2(\ell)+T_1(\ell).
\]
Since any method that uniquely decodes $\SR(C_1,C_2)$ up to half the minimum distance must also decode the subcodes $S_1$ and $\{\mathbf{a}_2x^2\}$ to the corresponding radii, one cannot hope to reduce the worst-case complexity asymptotically below the cost of decoding $C_1$ or $C_2$ themselves without improving the underlying Hamming decoders. In this black-box model, our two-step reduction is therefore optimal up to constant factors.

\begin{remark}
Our optimality statement concerns:
\begin{itemize}[leftmargin=1.5em]
  \item unique decoding up to half the minimum sum-rank distance;
  \item worst-case (rather than average-case) complexity;
  \item a model where access to $C_1$ and $C_2$ is via black-box Hamming decoders with costs $T_1(\ell)$ and $T_2(\ell)$.
\end{itemize}
Relaxing any of these requirements may permit alternative approaches that trade decoding radius, average complexity, structure, or implementability in different ways; see, e.g.,~\cite{MP2021} for related developments in the sum-rank setting.
\end{remark}

\section{Conclusion}
We presented a concise two-step reduction that uniquely decodes $\SR(C_1,C_2)$ up to half its minimum sum-rank distance by combining one BMD decoding of $C_2$ with one error/erasure decoding of $C_1$. The method removes the $d_1\ge \tfrac{2}{3}d_{\mathrm{sr}}$ requirement, retains $O(\ell^2)$ complexity for BCH/Goppa instantiations, and enlarges the design space for binary $2\times 2$ sum-rank-metric codes.



\end{document}